\documentclass[12pt, final]{l4dc2022}
 
\usepackage{latexsym}
\usepackage{graphicx}
\usepackage{color}
\usepackage{url}
\usepackage{Shorthands}
\usepackage{breqn}
\usepackage{caption}
\usepackage{makecell}


\usepackage{listings}
\usepackage{courier}
\usepackage{comment}
 \usepackage{dblfnote}
\setcitestyle{square}
\usepackage[ruled]{algorithm2e}
\SetKwInput{KwInput}{Input}                
\SetKwInput{KwOutput}{Output}              
\usepackage{hyperref}
\hypersetup{%
    pdfborder = {0 0 0}
}


\hfuzz = 3pt

%

\renewcommand{\hat}{\widehat}
\renewcommand{\tilde}{\widetilde}

\usepackage{mathtools}

\newcommand{\qed}{\hfill\blacksquare}

\newtheorem{mytheorem}{Theorem}
\newtheorem{mylemma}{Lemma}
\newtheorem{myremark}{Remark}
\newtheorem{assumption}{Assumption}


     \makeatletter
    \def\footnoterule{\kern-1\p@
      \hrule \@width 2in \kern 2.6\p@} 
    \makeatother

\title[Adaptive Stochastic MPC under Unknown Noise Distribution]{Adaptive Stochastic MPC under Unknown Noise Distribution}
\usepackage{times}



\author{%
 \Name{Charis Stamouli} \Email{stamouli@seas.upenn.edu}\\
 \Name{Anastasios Tsiamis} \Email{atsiamis@seas.upenn.edu}\\
 \Name{Manfred Morari} \Email{morari@seas.upenn.edu}\\
 \Name{George J. Pappas} \Email{pappasg@seas.upenn.edu}\\
 \addr University of Pennsylvania, Philadelphia, PA
}

\begin{document}

\maketitle

\begin{abstract}%
In this paper, we address the stochastic MPC (SMPC) problem for linear systems, subject to chance state constraints and hard input constraints, under unknown noise distribution. First, we reformulate the chance state constraints as deterministic constraints depending only on explicit noise statistics. Based on these reformulated constraints, we design a distributionally robust and robustly stable benchmark SMPC algorithm for the ideal setting of known noise statistics. Then, we employ this benchmark controller to derive a novel robustly stable adaptive SMPC scheme that learns the necessary noise statistics online, while guaranteeing time-uniform satisfaction of the unknown reformulated state constraints with high probability. The latter is achieved through the use of confidence intervals which rely on the empirical noise statistics and are valid uniformly over time. Moreover, control performance is improved over time as more noise samples are gathered and better estimates of the noise statistics are obtained, given the online adaptation of the estimated reformulated constraints. Additionally, in tracking problems with multiple successive targets our approach leads to an online-enlarged domain of attraction compared to robust tube-based MPC. A numerical simulation of a DC-DC converter is used to demonstrate the effectiveness of the developed methodology.
\end{abstract}

\begin{keywords}
Stochastic MPC, Chance Constraints, Online Learning
\end{keywords}

\captionsetup[figure]{labelfont={bf},labelformat={default},labelsep=colon,name={Fig.}}
\captionsetup[table]{labelfont={bf},labelformat={default},labelsep=colon,name={Table}}
\section{Introduction}
Model predictive control (MPC) has significantly matured in recent decades, establishing itself as the primary methodology for optimal control of multivariable constrained systems. However, model uncertainty poses a challenge to system performance under classical MPC frameworks, where a model description derived offline is used during control implementation. To mitigate the effects of uncertainties and improve system performance, various adaptive MPC strategies, which update the system model online, have been developed.

The majority of works on adaptive MPC focus on robust MPC, where hard constraints are enforced \citep{Fukushima2007,Weiss2014,Morari2014Adaptive,Desaraju2017}. While well established theoretically, robust MPC can be excessively conservative in practice, given its worst-case treatment of the noise which often accounts for unlikely uncertainty realizations. Stochastic MPC (SMPC) reduces this conservatism and increases control authority by allowing constraint violations via the use of chance constraints. Examples of applications where SMPC has been successfully used in the past include building climate control \citep{Borelli2012}, autonomous vehicle control \citep{Borelli2014}, and wind turbine control \citep{Kou2016}.

In the realm of stochastic model predictive control, several approaches relying on the reformulation of chance constraints into deterministic counterparts have been proposed in both the classical \citep{Korda2011,Paulson2019} and the adaptive \citep{Borelli2021,Spanos2018} regime. The main limitation of these algorithms is the restriction to special noise distributions, since analytic constraint reformulations involve the complex computation of multivariate integrals. The authors in \citep{Kouvaritakis2010} overcome this challenge through the use of numerical approximations based on the noise distribution. However, since the distribution is itself subject to uncertainty, assuming perfect knowledge of it is unrealistic, and in fact, even small errors can dramatically change the control inputs selected by a model predictive controller \citep{VanParys2016}. On the other hand, if the noise distribution is assumed to be unknown, the exact expression of the chance constraints is also unknown.

Recent work in \citep{Santos2020} addresses the above issue by replacing the true noise distribution with the empirical one and using confidence intervals to bound the uncertainty in the deterministically-reformulated chance constraints. The algorithm learns the reformulated constraints online and ensures satisfaction of the unknown chance constraints at each time step with high probability. Notice though that given such per-time-step high-confidence guarantees, the probability of time-uniform satisfaction of the chance constraints decays with system operation time.

An alternative to chance constraints reformulation that can be employed under unknown noise distribution is provided by scenario-based SMPC \citep{Calafiore2006,Morari2012,Morari2014Scenario,Lorenzen2017}. In scenario SMPC formulations, hard constraints are imposed for a finite number of noise samples (scenarios), which are drawn either empirically or through a random number generator. Nevertheless, unlike in \citep{Santos2020}, high-confidence satisfaction of the unknown chance constraints at each time step is ensured only for a sufficiently large number of noise samples.

In this paper, we address the problem of stochastic MPC for linear systems, subject to chance state constraints and hard input constraints, under unknown noise distribution. Our goal is to learn and adapt to the unknown noise online so that control performance is gradually improved, while also maintaining safety at all time steps with high probability. Our contributions are the following:
\vspace*{-0.2cm}
\begin{enumerate}[label=\roman*)]
    \item For the ideal setting of known noise statistics we propose a distributionally robust benchmark SMPC scheme which also guarantees recursive feasibility and robust stability. This is achieved by replacing the chance state constraints with novel distribution-free deterministic constraints relying only on explicit noise statistics, in contrast to \citep{Kouvaritakis2010,Lorenzen2017b}, where distribution-dependent constraints are employed instead. 
    \vspace{-0.3cm}
    \item For the actual setting of unknown noise statistics we introduce a novel adaptive SMPC algorithm that estimates the necessary noise statistics and learns the reformulated constraints using noise samples collected online. Recursive feasibility and robust stability are ensured, while control performance is improved over time through adaptation. Additionally, in tracking problems with multiple successive targets our scheme leads to an online-enlarged domain of attraction compared to robust tube-based MPC.
    \vspace{-0.3cm}
    \item By exploiting time-uniform confidence intervals based on the empirical noise statistics, we bound the uncertainty in the reformulated constraints uniformly over time with high probability. Note that in contrast to \citep{Santos2020}, this probability is constant and does not decrease with system operation time. To the best of our knowledge, this is the first work in the SMPC literature that guarantees satisfaction of unknown chance state constraints at all times steps jointly with high confidence. 
\end{enumerate}
\vspace{-0.15cm}
We exhibit the  benefits of the proposed scheme through numerical simulations. All proofs can be found in Appendix A.

\section{Problem Formulation}
Consider a discrete-time linear time-invariant system with additive noise, described by the following state-space model:
\begin{equation}\label{system}
    x_{t+1} = Ax_t+Bu_t+w_t,
\end{equation}
where $x_t\in\setR^n$ is the state, $u_t\in\setR^m$ is the control input, and $A,\;B$ are system matrices of appropriate dimensions. The initial state $x_0$ is assumed to be deterministic and fixed. Moreover, we assume that the process noise $w_t\in\setR^n$ is i.i.d. with unknown probability distribution supported on the hypercube $\calW = \{w\in\setR^n \where \norm{w}_\infty\leq a_w\}$, where $a_w$ is a known positive noise bound.

\begin{assumption} The noise distribution is unknown. The system matrices $A,\;B$ and the noise bound $a_w$ are known. The noise mean is also known and equal to zero, that is, $\mu_w:=\Exp w_t=0$.
\end{assumption}
\begin{myremark} An extension for unknown noise mean $\mu_w$ is presented in Appendix B. Note that in practice the noise bound $a_w$ could be a rough estimate of the noise components' distribution support that we know from system structure or have obtained from observations. Our results can easily be extended to the case of general polytopes $\calW$ that contain the origin. 
\end{myremark}

The state is assumed to be subject to chance constraints of the form:
\begin{equation}\label{chance_state_constraints}
    \Prob\left(g_j^Tx_t\leq h_j\right)\geq1-\alpha_j,
\end{equation}
for $j=1,\ldots,n_x$, whereas the control input should satisfy hard constraints of the form:
\begin{equation}\label{input_constraints}
    u_t\in\calU,
\end{equation}
for all $t\geq0$, where $\alpha_j\in(0,1)$ is the maximum admissible probability of violating the constraint $g_j^Tx_t\leq h_j$, and $\calU\subseteq\setR^m$ is a polytope containing the origin. The probabilities $\alpha_j$ are specified directly from application requirements. Note that given the assumption that the noise distribution is unknown, the exact expression of the chance state constraints \eqref{chance_state_constraints} is also unknown. Hence, herein we refer to \eqref{chance_state_constraints} as \textit{unknown} chance state constraints. Meanwhile, we consider the control performance objective determined by the cost function:
\begin{equation*}
    \ell(x_t,u_t) = x_t^TQx_t+u_t^TRu_t,
\end{equation*}
where $Q\in\setR^{n\times n}$ and $R\in\setR^{m\times m}$ are positive-definite matrices. Notice that using the state measurements we can sequentially and online collect the noise samples $w_0,\ldots,w_{t-1}$, as we control the system up to time step $t$.

The goal of this work is to develop, under the above assumptions, a recursively feasible and robustly stable adaptive SMPC algorithm for system \eqref{system}. In particular, the algorithm should respect the input constraints \eqref{input_constraints} as well as certain state constraints which are designed online so that:
\begin{enumerate}[label=\roman*)]
    \vspace{-0.2cm}
    \item control performance is improved over time, as more noise samples are collected and better estimates of explicit noise statistics are obtained;
    \vspace{-0.3cm}
    \item time-uniform satisfaction of \eqref{chance_state_constraints} (i.e., satisfaction of \eqref{chance_state_constraints} at all time steps) is guaranteed with high confidence.
\end{enumerate}
\vspace{-0.2cm}
The benchmark SMPC scheme introduced in Section 4 for the ideal case of known noise statistics is used as reference for evaluating the control performance of our adaptive SMPC scheme (see Section 5) for unknown noise statistics.

\section{Robust Tube-based MPC}
In this section, we briefly review the classical robust tube-based MPC \citep{Mayne2005}, which we leverage in the design of our benchmark SMPC algorithm to provide recursive feasibility and robust stability guarantees (see Section 4 for details). The main difference between our framework and robust tube-based MPC lies in the formulation of the state constraints. Particularly, the authors in \citep{Mayne2005} consider hard state constraints of the form:
\begin{equation}\label{hard_state_constraints}
    x_t\in\calX = \left\{x\in\setR^n \where g_j^Tx\leq h_j,\;j=1,\ldots,n_x\right\}
\end{equation}
instead of chance state constraints of the form \eqref{chance_state_constraints}. In addition, they only assume that the noise distribution is bounded, thus adopting a worst-case noise model, in contrast to our approach where a stochastic noise model is employed instead. 

Let $\bar{x}_t$ denote the nominal (noise-free) state, $\bar{u}_t=u_t\left(\bar{x}_t\right)$ the corresponding nominal input, and $e_t=x_t-\bar{x}_t$ the difference between the actual and the nominal state. We select gains $K,\;\bar{K}\in\setR^{m\times n}$ such that the system matrices $A_{cl}:=A+BK,\;\bar{A}_{cl} := A+B\bar{K}$ are stable. For any vector $z$, we use $z_{t+k|t}$ to denote the $k$-step ahead prediction of $z$ made at time step $t$ by some model predictive controller (i.e., the prediction of $z_{t+k}$ made at time step $t$). At each time step $t$ the robust tube-based MPC solves the following finite horizon optimal control problem:
\begin{align}\label{RMPC}
    \min_{\substack{\bar{x}_{t|t},\ldots,\bar{x}_{t+N|t} \\ \bar{u}_{t|t},\ldots,\bar{u}_{t+N-1|t}}}\; &\bar{x}_{t+N|t}^TP\bar{x}_{t+N|t}+\sum_{k=0}^{N-1}\ell(\bar{x}_{t+k|t},\bar{u}_{t+k|t}) \nonumber \\
    \subto &\bar{x}_{t+k+1|t} = A\bar{x}_{t+k|t}+B\bar{u}_{t+k|t} \\
        & \bar{x}_{t+k|t}\in\bar{\calX}_R,\;k=1,\ldots,N-1 \nonumber \\
        &\bar{u}_{t+k|t}\in \calU\ominus K\calE,\;k=0,\ldots,N-1 \nonumber \\
        &\bar{x}_{t+N|t}\in\bar{\calX}_{f,R} \nonumber \\
        &x_t\in\bar{x}_{t|t}\oplus\calE, \nonumber
\end{align}
where $\bar{x}_{t+k|t},\;\bar{u}_{t+k|t}$ denote the prediction of $\bar{x}_{t+k},\;\bar{u}_{t+k}$ made at time step $t$, respectively, $\calE$ is the minimum robust invariant set \citep[Definition 3]{Rakovic2004} for the error system:
\begin{equation}\label{error_system}
    e_{t+k+1|t} = A_{cl}e_{t+k|t}+w_{t+k|t},
\end{equation}
and $\bar{\calX}_R=\calX\ominus\calE$. The set $\bar{\calX}_{f,R}$ is the maximum invariant set \citep[Definition 2.2]{Kerrigan2000} for the system $\bar{x}^+=\bar{A}_{cl}\bar{x}$ that satisfies the constraints $\bar{\calX}_{f,R}\subseteq\bar{\calX}_R$ and $\bar{K}\bar{\calX}_{f,R}\subseteq \calU\ominus K\calE$, and $P$ is the unique positive definite solution of the Lyapunov equation:
\begin{equation}\label{lyapunov_equation}
   \bar{A}_{cl}^TP\bar{A}_{cl}-P= -Q-\bar{K}^TR\bar{K}.
\end{equation}
After we obtain the optimal sequences $\{\bar{x}_{t|t}^*,\ldots,\bar{x}_{t+N|t}^*\}$ and $\{\bar{u}_{t|t}^*,\ldots,\bar{u}_{t+N-1|t}^*\}$ of \eqref{RMPC}, the tube-based state-feedback control policy:
\begin{equation}\label{tube_controller}
    u_t\left(x_t\right) = K\left(x_t-\bar{x}_{t|t}^*(x_t)\right)+\bar{u}_{t|t}^*(x_t)
\end{equation}
is applied to system \eqref{system}. Notice in \eqref{tube_controller} that the first term on the right-hand side keeps the actual state close to the nominal state, whereas the second term steers the nominal state to the origin. At the next time step we repeat the same process, thus yielding a receding horizon strategy. Employing \eqref{error_system}, one can write the halfspace inequalities  determining the nominal state constraints of \eqref{RMPC} as follows:
\begin{equation}\label{RMPC_state_constraints}
    g_j^T\bar{x}_{t+k|t}\leq h_j-\gamma_j-\norm{g_j}_1a_w,
\end{equation}
where $\gamma_j = \max\limits_{e\in\calE}g_j^TA_{cl}e$, for all $j=1,\ldots,n_x$. In the next section, stochasticity is employed to relax these constraints, thus reducing the conservatism of robust MPC \citep{Mayne2005}.

\section{Tube-based SMPC under Known Noise Statistics}
In this section, we consider the ideal setting where the noise covariance $\Sigma_w:= \mathbb{E}w_tw^T_t$ is known and introduce a
stochastic model predictive controller with polytopic state constraints that depend only on $\Sigma_w$ and are sufficient for the original state constraints \eqref{chance_state_constraints}. Herein, we refer to this controller as \textit{benchmark SMPC}. This is because the performance of this controller will be used as reference for evaluating the performance of our adaptive SMPC scheme, which is introduced in the next section for the actual setting where $\Sigma_w$ is unknown. More specifically, inspired by the idea of constraint tightening employed in robust tube-based MPC, we derive a less conservative robustly stable MPC algorithm, by imposing suitable mixed probabilistic/worst-case constraint tightening to \eqref{chance_state_constraints} based on $\Sigma_w$. As we show next, our approach leads to a relaxation of the nominal state constraints \eqref{RMPC_state_constraints} of the robust tube-based MPC, where worst-case constraint tightening was instead applied to \eqref{hard_state_constraints}. 

\begin{mylemma}
Given system \eqref{system} and a prediction horizon of length $N$, at each time step $t$ the $k$-step ahead prediction of the actual state $x_{t+k|t}$ satisfies:
\begin{equation*}
    \Prob\left(g_j^Tx_{t+k|t}\leq h_j\right)\geq1-\alpha_j,
\end{equation*}
if $x_t\in\bar{x}_{t|t}\oplus\calE$ and the $k$-step ahead prediction of the nominal state $\bar{x}_{t+k|t}$ satisfies:
\begin{equation}\label{pre_BSMPC_nominal_state_constraints}
    g_j^T\bar{x}_{t+k|t}\leq h_j-\gamma_j-f(\alpha_j)\sqrt{g_j^T\Sigma_wg_j},
\end{equation}
with $\gamma_j$ as in \eqref{RMPC_state_constraints} and $f(\alpha_j)=\sqrt{\frac{1-\alpha_j}{\alpha_j}}$, for all $j=1,\ldots,n_x$ and $k=1,\ldots,N$.
\end{mylemma}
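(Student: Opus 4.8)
The plan is to split the predicted state into its nominal part plus a prediction‑error part, absorb all of the error except the single most recent noise sample into the worst‑case tightening $\gamma_j$ by exploiting robust invariance of $\calE$, and then bound the one remaining stochastic term with a distribution‑free one‑sided tail inequality.

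First, write $x_{t+k|t} = \bar{x}_{t+k|t} + e_{t+k|t}$, where $e_{t+k|t}$ is the $k$‑step prediction of the error system \eqref{error_system} started from $e_{t|t} = x_t - \bar{x}_{t|t}$. By the hypothesis $x_t \in \bar{x}_{t|t}\oplus\calE$ we have $e_{t|t}\in\calE$; since $\calE$ is robust positively invariant for \eqref{error_system} (i.e. $A_{cl}\calE\oplus\calW\subseteq\calE$) and $w_{t+i|t}\in\calW$ almost surely, an induction on $\ell$ gives $e_{t+\ell|t}\in\calE$ almost surely for every $\ell\ge 0$. Now peel off the last noise sample, $e_{t+k|t} = A_{cl}e_{t+k-1|t} + w_{t+k-1|t}$, and project onto $g_j$:
\[
 g_j^T x_{t+k|t} = g_j^T\bar{x}_{t+k|t} + g_j^T A_{cl}e_{t+k-1|t} + g_j^T w_{t+k-1|t} \le g_j^T\bar{x}_{t+k|t} + \gamma_j + g_j^T w_{t+k-1|t}
\]
almost surely, where the inequality uses $e_{t+k-1|t}\in\calE$ together with $\gamma_j = \max_{e\in\calE} g_j^T A_{cl} e$ as in \eqref{RMPC_state_constraints}.

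Next, substitute the assumed nominal constraint \eqref{pre_BSMPC_nominal_state_constraints}; the $\gamma_j$ terms cancel, leaving $g_j^T x_{t+k|t} \le h_j - f(\alpha_j)\sqrt{g_j^T\Sigma_w g_j} + g_j^T w_{t+k-1|t}$ almost surely, hence
\[
 \Prob\!\left(g_j^T x_{t+k|t} > h_j\right) \le \Prob\!\left(g_j^T w_{t+k-1|t} > f(\alpha_j)\sqrt{g_j^T\Sigma_w g_j}\right).
\]
The scalar $g_j^T w_{t+k-1|t}$ has mean $g_j^T\mu_w = 0$ (Assumption 1) and variance $g_j^T\Sigma_w g_j$, both finite since $w$ is bounded. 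If $g_j^T\Sigma_w g_j = 0$ then $g_j^T w_{t+k-1|t}=0$ almost surely and the right‑hand side is $0\le\alpha_j$; otherwise apply the one‑sided Chebyshev (Cantelli) inequality $\Prob(Z>\lambda)\le \sigma^2/(\sigma^2+\lambda^2)$ for a zero‑mean $Z$ of variance $\sigma^2$ and $\lambda>0$, with $\sigma^2 = g_j^T\Sigma_w g_j$ and $\lambda = f(\alpha_j)\sigma$. Since $f(\alpha_j)^2 = (1-\alpha_j)/\alpha_j$, the bound is $1/(1+f(\alpha_j)^2) = \alpha_j$, giving $\Prob(g_j^T x_{t+k|t}\le h_j)\ge 1-\alpha_j$ for all $j=1,\dots,n_x$ and $k=1,\dots,N$.

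The only genuinely delicate point is the decomposition in the first step: one must keep \emph{exactly one} noise sample in the stochastic term and push the entire remaining error $A_{cl}e_{t+k-1|t}$ into the deterministic term, since $\calE$ already accounts for that contribution in the worst case. If instead one tried to treat the full accumulated noise $\sum_{i=0}^{k-1} A_{cl}^{\,k-1-i} w_{t+i|t}$ probabilistically, its covariance depends on $k$ and on $A_{cl}$, and the tightening would cease to be the clean, horizon‑independent quantity $f(\alpha_j)\sqrt{g_j^T\Sigma_w g_j}$; everything else is the substitution of \eqref{pre_BSMPC_nominal_state_constraints} and a single application of Cantelli's inequality.
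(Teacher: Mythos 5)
Your proof is correct and follows essentially the same route as the paper's: the same decomposition $x_{t+k|t}=\bar{x}_{t+k|t}+A_{cl}e_{t+k-1|t}+w_{t+k-1|t}$, absorption of $A_{cl}e_{t+k-1|t}$ into $\gamma_j$ via robust invariance of $\calE$, and a single application of Cantelli's inequality to $g_j^Tw_{t+k-1|t}$. Your explicit handling of the degenerate case $g_j^T\Sigma_wg_j=0$ is a small extra care the paper omits, but it does not change the argument.
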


We want to employ the constraints \eqref{pre_BSMPC_nominal_state_constraints} to develop a simple stochastic tube-based MPC algorithm that is less conservative than the robust tube-based MPC. This means that \eqref{pre_BSMPC_nominal_state_constraints} should be looser than \eqref{RMPC_state_constraints}, that is, we would like to have $ f(\alpha_j)\sqrt{g_j^T\Sigma_wg_j}\leq \norm{g_j}_1a_w$, for all $\alpha_j\in(0,1)$. However, it is clear from the form of $f(\cdot)$ that this inequality may not be satisfied for  small values of $\alpha_j$. Hence, in order to prevent excessive conservatism, we replace the constraints \eqref{pre_BSMPC_nominal_state_constraints} with:
\begin{align}\label{BSMPC_nominal_state_constraints}
    g_j^T\bar{x}_{t+k|t}\leq h_j-\gamma_j-\min\left\{\norm{g_j}_1a_w,f(\alpha_j)\sqrt{g_j^T\Sigma_wg_j}\right\}.
\end{align}
\begin{myremark}
Notice in \eqref{BSMPC_nominal_state_constraints} that the first member of the minimum (robust) will be active for small values of $\alpha_j$ and the second one (stochastic) otherwise. This is reasonable, because selecting very small maximum admissible probabilities of constraint violation $\alpha_j$ is approximately equivalent to imposing hard state constraints, which are guaranteed to be satisfied due to the robust member of the minimum.
\end{myremark}
\noindent Let $\bar{\calX}_S$ denote the polytope defined by the halfspace inequalities \eqref{BSMPC_nominal_state_constraints}, and $\bar{\calX}_{f,S}$ the maximum invariant set for the system $\bar{x}^+=\bar{A}_{cl}\bar{x}$ that satisfies the constraints $\bar{\calX}_{f,S}\subseteq\bar{\calX}_S$ and $\bar{K}\bar{\calX}_{f,S}\subseteq \calU\ominus K\calE$. At each time step $t$ our benchmark SMPC solves the following optimal control problem:
\begin{align}\label{BSMPC}
    \min_{\substack{\bar{x}_{t|t},\ldots,\bar{x}_{t+N|t} \\ \bar{u}_{t|t},\ldots,\bar{u}_{t+N-1|t}}}\; &\bar{x}_{t+N|t}^TP\bar{x}_{t+N|t}+\sum_{k=0}^{N-1}\ell(\bar{x}_{t+k|t},\bar{u}_{t+k|t}) \nonumber \\
    \subto &\bar{x}_{t+k+1|t} = A\bar{x}_{t+k|t}+B\bar{u}_{t+k|t} \\
        & \bar{x}_{t+k|t}\in\bar{\calX}_S,\;k=1,\ldots,N-1 \nonumber \\
        &\bar{u}_{t+k|t}\in \calU\ominus K\calE,\;k=0,\ldots,N-1 \nonumber \\
        &\bar{x}_{t+N|t}\in\bar{\calX}_{f,S} \nonumber \\
        &x_t\in\bar{x}_{t|t}\oplus\calE \nonumber
\end{align}
and then the state-feedback control law  \eqref{tube_controller} is applied to system \eqref{system}.
\begin{mytheorem}[Benchmark SMPC]
Suppose the noise covariance $\Sigma_w$ is known and the optimization problem \eqref{BSMPC} is feasible at time step $t=0$. Then, \eqref{BSMPC} is feasible and ensures satisfaction of the original chance state constraints \eqref{chance_state_constraints} at every time step $t\geq0$. Moreover, system \eqref{system} in closed-loop with the model predictive controller defined by \eqref{BSMPC} and \eqref{tube_controller} asymptotically converges to the set $\calE$ for all noise realizations. 
\end{mytheorem}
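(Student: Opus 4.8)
The plan is to establish, in order, recursive feasibility of \eqref{BSMPC}, time-uniform satisfaction of \eqref{chance_state_constraints}, and pathwise convergence of the closed loop to $\calE$. The first and third claims follow the standard tube-MPC template of \citep{Mayne2005}; the second is where the $\min$ appearing in \eqref{BSMPC_nominal_state_constraints} has to be handled with a case split.

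\textbf{Recursive feasibility.} Assume \eqref{BSMPC} is feasible at $t$ with optimizers $\bar{x}^*_{\cdot|t},\bar{u}^*_{\cdot|t}$. I would use the shifted candidate at $t+1$: keep $\bar{x}_{t+1+k|t+1}=\bar{x}^*_{t+1+k|t}$, $\bar{u}_{t+1+k|t+1}=\bar{u}^*_{t+1+k|t}$ on the overlapping indices and append $\bar{x}_{t+1+N|t+1}=\bar{A}_{cl}\bar{x}^*_{t+N|t}$, $\bar{u}_{t+N|t+1}=\bar{K}\bar{x}^*_{t+N|t}$. Nominal dynamics, the constraints $\bar{\calX}_S$ and $\calU\ominus K\calE$, and the terminal constraint $\bar{\calX}_{f,S}$ hold on the overlap by optimality at $t$, and on the appended step by the defining properties of $\bar{\calX}_{f,S}$ ($\bar{A}_{cl}$-invariance, $\bar{\calX}_{f,S}\subseteq\bar{\calX}_S$, $\bar{K}\bar{\calX}_{f,S}\subseteq\calU\ominus K\calE$). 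For the tube constraint, substituting $u_t=K(x_t-\bar{x}^*_{t|t})+\bar{u}^*_{t|t}$ into \eqref{system} and using the nominal update gives $x_{t+1}-\bar{x}^*_{t+1|t}=A_{cl}e_t+w_t$ with $e_t:=x_t-\bar{x}^*_{t|t}\in\calE$; since $w_t\in\calW$ and $\calE$ is the minimal robust invariant set, $A_{cl}e_t+w_t\in A_{cl}\calE\oplus\calW\subseteq\calE$, so $x_{t+1}\in\bar{x}_{t+1|t+1}\oplus\calE$. This holds for every realization of $w_t$, so feasibility propagates from $t=0$.

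\textbf{Chance constraints.} Fix $t$ and $j$. By the previous step the realized state satisfies $x_{t+1}=\bar{x}^*_{t+1|t}+A_{cl}e_t+w_t$ with $e_t\in\calE$ and $w_t$ independent of $(\bar{x}^*_{t+1|t},e_t)$, zero-mean with covariance $\Sigma_w$. If the stochastic term of the $\min$ is active for this $j$, the nominal constraint \eqref{BSMPC_nominal_state_constraints} reduces to \eqref{pre_BSMPC_nominal_state_constraints}, so the preceding lemma --- applied conditionally on the past, which is legitimate by the stated independence --- yields $\Prob(g_j^Tx_{t+1}\le h_j)\ge1-\alpha_j$; the engine is Cantelli's inequality together with $f(\alpha_j)=\sqrt{(1-\alpha_j)/\alpha_j}$ and the bound $g_j^TA_{cl}e_t\le\gamma_j$. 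If instead the robust term is active, then $g_j^T\bar{x}^*_{t+1|t}\le h_j-\gamma_j-\norm{g_j}_1a_w$, and combining $g_j^TA_{cl}e_t\le\gamma_j$ with $g_j^Tw_t\le\norm{g_j}_1\norm{w_t}_\infty\le\norm{g_j}_1a_w$ gives $g_j^Tx_{t+1}\le h_j$ surely, hence with probability one. Either way \eqref{chance_state_constraints} holds for all $t\ge1$ (and for $t=0$ it is immediate since $x_0$ is the fixed feasible initial state).

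\textbf{Convergence to $\calE$.} Let $V_t$ be the optimal value of \eqref{BSMPC}. Inserting the shifted candidate of the first step into the cost, and rewriting the Lyapunov equation \eqref{lyapunov_equation} as $(\bar{A}_{cl}\bar{x})^TP(\bar{A}_{cl}\bar{x})-\bar{x}^TP\bar{x}=-\ell(\bar{x},\bar{K}\bar{x})$, the change in terminal cost exactly cancels the stage cost of the appended step, so that $V_{t+1}\le V_t-\ell(\bar{x}^*_{t|t},\bar{u}^*_{t|t})$. Since $V_t\ge0$, the sequence converges, hence $\ell(\bar{x}^*_{t|t},\bar{u}^*_{t|t})\to0$, and positive-definiteness of $Q$ forces $\bar{x}^*_{t|t}\to0$. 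From $x_t\in\bar{x}^*_{t|t}\oplus\calE$ one then gets $\mathrm{dist}(x_t,\calE)\le\norm{\bar{x}^*_{t|t}}\to0$. None of this used the distribution of $w_t$, only $w_t\in\calW$ (through $e_t\in\calE$), so convergence to $\calE$ holds for all noise realizations. The step I expect to need the most care is the chance-constraint argument: one cannot invoke the preceding lemma uniformly because \eqref{BSMPC_nominal_state_constraints} uses a $\min$, so one must both check that the robust branch already delivers the stronger probability-one guarantee and be careful that conditioning on the noise-dependent optimizer $\bar{x}^*_{t+1|t}$ does not spoil the independence underlying Cantelli's bound.
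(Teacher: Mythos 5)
Your proof is correct and follows essentially the same route as the paper, which defers recursive feasibility and stability to the tube-MPC argument of \citep{Mayne2005} and obtains the chance constraints from Lemma 1. The one place you go beyond the paper's write-up --- the explicit case split on the $\min$ in \eqref{BSMPC_nominal_state_constraints}, noting that the robust branch yields satisfaction with probability one while the stochastic branch invokes Cantelli via Lemma 1 --- is a detail the paper only gestures at (in Remark 2), and your handling of it is the right one.
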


\begin{myremark}
The problem of stochastic MPC with stability guarantees was addressed in \citep{Kouvaritakis2010}, where distribution-dependent reformulated state constraints were employed. In contrast, the reformulated state constraints \eqref{BSMPC_nominal_state_constraints} of our benchmark SMPC scheme are distribution-free in the sense that they guarantee satisfaction of the original state constraints \eqref{chance_state_constraints} for all noise distributions with zero mean and covariance $\Sigma_w$. Employing these constraints, we were able to derive a distributionally robust SMPC algorithm, which also enjoys stability guarantees.
\end{myremark}
\section{Adaptive Tube-based SMPC under Unknown Noise Statistics}
In the previous section, we introduced a simple tube-based SMPC algorithm assuming that the noise covariance $\Sigma_w$ is known. We are now ready to present an adaptive SMPC algorithm for unknown $\Sigma_w$ that learns the nominal state constraints \eqref{BSMPC_nominal_state_constraints} of the benchmark SMPC online and guarantees their satisfaction at all time steps with high probability. The main idea is to use noise samples to estimate the term $g_j^T\Sigma_wg_j$ of \eqref{BSMPC_nominal_state_constraints} online, and provide explicit confidence intervals for these estimates so that \eqref{BSMPC_nominal_state_constraints} is always satisfied. As more samples are gathered, the confidence intervals become smaller, thus relaxing the constraints and enhancing the control performance over time. 

We estimate $\Sigma_w$ using the standard sample covariance, which is defined as $\hat{\Sigma}_w^t = \frac{1}{t}\sum_{i=0}^{t-1}w_iw_i^T$, at each time step $t\geq1$. The noise samples $w_0,\ldots,w_{t-1}$ can be computed as $ w_i = x_{i+1}-Ax_i-Bu_i$ for all $t\geq1$. To obtain a confidence interval for the unknown term $g_j^T\Sigma_wg_j$ of \eqref{BSMPC_nominal_state_constraints} based on $\hat{\Sigma}_w^t$, we employ the boundedness of the noise, which allows us to derive the following lemma.
\begin{mylemma}
Fix a failure probability $\delta\in(0,1)$. With probability at least $1-\delta$, the sample covariance $\hat{\Sigma}_w^t$ satisfies: 
\begin{equation}\label{confidence_bound}
    g_j^T\Sigma_wg_j\leq g_j^T\hat{\Sigma}_w^tg_j+r_{tj}(\delta),
\end{equation}
where $r_{tj}(\delta) = \norm{g_j}_1^2a_w^2\sqrt{0.5t^{-1}\log((\pi^4t^2n_x^2)/(36\delta))}$, uniformly over all $t\geq1$ and $j=1,\ldots,n_x$.
\end{mylemma}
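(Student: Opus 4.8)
The plan is to reduce \eqref{confidence_bound} to a one-sided concentration inequality for the empirical mean of bounded i.i.d.\ scalars and then make the statement uniform over $t$ and $j$ via a union bound with a summable allocation of the failure probability. Concretely, I would fix $j\in\{1,\dots,n_x\}$ and $t\ge1$ and introduce $Z_i:=g_j^Tw_iw_i^Tg_j=(g_j^Tw_i)^2$ for $i=0,\dots,t-1$. Since the online-collected samples $w_i=x_{i+1}-Ax_i-Bu_i$ are exactly the true process-noise realizations, the $w_i$---and hence the $Z_i$---are i.i.d., even though $u_i$ is fed back from $w_0,\dots,w_{i-1}$, so no martingale machinery is needed. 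By H\"older's inequality and the support assumption $\norm{w_i}_\infty\le a_w$, we have $|g_j^Tw_i|\le\norm{g_j}_1a_w$, hence $Z_i\in[0,\norm{g_j}_1^2a_w^2]$ almost surely, while $\Exp Z_i=g_j^T\Sigma_wg_j$ and $\tfrac1t\sum_{i=0}^{t-1}Z_i=g_j^T\hat{\Sigma}_w^tg_j$. Thus \eqref{confidence_bound} is precisely the assertion that the mean of the $Z_i$ exceeds their empirical mean by no more than $r_{tj}(\delta)$.

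Next I would apply the one-sided Hoeffding inequality to $Z_0,\dots,Z_{t-1}$, obtaining, for every $\varepsilon>0$,
\begin{equation*}
\Prob\!\left(g_j^T\Sigma_wg_j-g_j^T\hat{\Sigma}_w^tg_j>\varepsilon\right)\le\exp\!\left(-\frac{2t\,\varepsilon^2}{\norm{g_j}_1^4a_w^4}\right).
\end{equation*}
Equating the right-hand side with $\delta_{tj}:=36\,\delta/(\pi^4t^2n_x^2)$ and solving for $\varepsilon$ gives exactly $\varepsilon=r_{tj}(\delta)$, so for each fixed $(t,j)$ the event that \eqref{confidence_bound} fails has probability at most $\delta_{tj}$. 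A union bound over all $t\ge1$ and $j\in\{1,\dots,n_x\}$ then controls the total failure probability: using $\sum_{t=1}^{\infty}t^{-2}=\pi^2/6$,
\begin{equation*}
\sum_{t=1}^{\infty}\sum_{j=1}^{n_x}\delta_{tj}=\frac{36\,\delta\,n_x}{\pi^4n_x^2}\sum_{t=1}^{\infty}\frac1{t^2}=\frac{36\,\delta}{\pi^4n_x}\cdot\frac{\pi^2}{6}=\frac{6\,\delta}{\pi^2n_x}\le\delta,
\end{equation*}
since $n_x\ge1$. Hence with probability at least $1-\delta$ the bound \eqref{confidence_bound} holds simultaneously for all $t\ge1$ and all $j$, which is the claim.

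The argument is essentially routine and I do not anticipate a genuine obstacle; the only step needing care is the design of the per-pair budget $\delta_{tj}$. It must decay fast enough in $t$ to be summable---which is exactly what forces $r_{tj}$ to grow logarithmically rather than stay constant in $t$, the price of time-uniformity---and it must also be split across the $n_x$ constraints. The allocation $\delta_{tj}=\bigl(6/(\pi^2t^2)\bigr)\bigl(6/(\pi^2n_x^2)\bigr)\delta$ achieves both while leaving the constant inside the logarithm equal to the stated $\pi^4t^2n_x^2/(36\delta)$, at the mild cost of a slack factor ($6/(\pi^2n_x)\le1$) in the total failure probability. One could instead allocate the constraint budget as $\delta/n_x$ to tighten constants, but that would alter the logarithmic constant and is not needed here.
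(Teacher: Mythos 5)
Your proposal is correct and follows essentially the same route as the paper's proof: apply Hoeffding's inequality to the bounded i.i.d.\ scalars $(g_j^Tw_i)^2$ and then take a union bound over $t\geq1$ and $j$ with a summable allocation of the failure probability. The only (immaterial) difference is that you allocate the budget uniformly as $36\delta/(\pi^4t^2n_x^2)$ across the $n_x$ constraints, whereas the paper uses $1/j^2$ weights and then relaxes $j\le n_x$ inside the logarithm; both yield exactly the stated $r_{tj}(\delta)$.
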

Notice in \eqref{confidence_bound} that the bound $g_j^T\hat{\Sigma}_w^tg_j+r_{tj}(\delta)$, where $\delta$ is a design parameter determining its confidence, converges to $g_j^T\Sigma_wg_j$ as fast as $\calO\left(\sqrt{t^{-1}\log t}\right)$. From Lemma 2 we deduce that a sufficient condition for the constraints \eqref{BSMPC_nominal_state_constraints} of the benchmark SMPC to be satisfied for all $t\geq1$ with probability at least $1-\delta$ is the following:
\begin{align}\label{pre_ASMPC_nominal_state_constraints}
    &g_j^T\bar{x}_{t+k|t}\leq h_j-\gamma_j-\min\bigg\{\norm{g_j}_1a_w, f(\alpha_j)\sqrt{g_j^T\hat{\Sigma}_w^tg_j+r_{tj}(\delta)}\bigg\}.
\end{align}
Given that it suffices to update the nominal state constraint set only when the constraints \eqref{pre_ASMPC_nominal_state_constraints} become looser, we define $\bar{\calX}_S^t$ as the intersection of the halfspaces determined by the inequalities:
\begin{align}\label{ASMPC_nominal_state_constraints}
 &g_j^T\bar{x}_{t+k|t}\leq h_j-\gamma_j- \min\left\{\norm{g_j}_1a_w,\min_{1\leq\tau\leq t}f(\alpha_j)\sqrt{g_j^T\hat{\Sigma}_w^\tau g_j+r_{\tau j}(\delta)}\right\},
\end{align}
for each $t\geq1$. Moreover, we define $\bar{\calX}_{f,S}^t$ as the maximum invariant set for the system $\bar{x}^+=\bar{A}_{cl}\bar{x}$ that satisfies the constraints $\bar{\calX}_{f,S}^t\subseteq\bar{\calX}_S^t$ and $\bar{K}\bar{\calX}_{f,S}^t\subseteq \calU\ominus K\calE$, for each $t\geq1$. Since the first noise sample is computed at time step $t=1$, we set $\bar{\calX}_S^0=\bar{\calX}_R$ and $\bar{\calX}_{f,S}^0 = \bar{\calX}_{f,R}$. At each time step $t$ our adaptive SMPC algorithm solves the following finite horizon optimal control problem:
\begin{align}\label{ASMPC}
    \min_{\substack{\bar{x}_{t|t},\ldots,\bar{x}_{t+N|t} \\ \bar{u}_{t|t},\ldots,\bar{u}_{t+N-1|t}}}\; &\bar{x}_{t+N|t}^TP\bar{x}_{t+N|t}+\sum_{k=0}^{N-1}\ell(\bar{x}_{t+k|t},\bar{u}_{t+k|t}) \nonumber \\
    \subto &\bar{x}_{t+k+1|t} = A\bar{x}_{t+k|t}+B\bar{u}_{t+k|t} \\
        & \bar{x}_{t+k|t}\in\bar{\calX}_S^t,\;k=1,\ldots,N-1 \nonumber \\
        &\bar{u}_{t+k|t}\in \calU\ominus K\calE,\;k=0,\ldots,N-1 \nonumber \\
        &\bar{x}_{t+N|t}\in\bar{\calX}_{f,S}^t \nonumber \\
        &x_t\in\bar{x}_{t|t}\oplus\calE \nonumber
\end{align}
and then the state-feedback controller \eqref{tube_controller} is applied to system \eqref{system}.

\begin{mytheorem}[Adaptive SMPC]
Fix a failure probability $\delta\in(0,1)$ and suppose the optimization problem \eqref{ASMPC} is feasible at time step $t=0$. Then, \eqref{ASMPC} is feasible at every time step $t\geq0$, and ensures satisfaction of the unknown original chance state constraints \eqref{chance_state_constraints} uniformly over all $t\geq0$ with probability at least $1-\delta$. Moreover, system \eqref{system} in closed-loop with the model predictive controller defined by \eqref{ASMPC} and \eqref{tube_controller} asymptotically converges to the set $\calE$ for all noise realizations.
\end{mytheorem}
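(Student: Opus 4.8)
The plan is to establish, in order, recursive feasibility, time-uniform satisfaction of \eqref{chance_state_constraints}, and convergence to $\calE$, reducing the first and third claims to the arguments already developed for the benchmark controller (Theorem~1 / robust tube-based MPC) and isolating the single new ingredient: the interplay between the online-updated constraint sets $\bar\calX_S^t,\bar\calX_{f,S}^t$ and the uniform-in-$t$ confidence bound of Lemma~2.

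\emph{Monotonicity and recursive feasibility.} First I would note that the tightening term in \eqref{ASMPC_nominal_state_constraints} is non-increasing in $t$ for each $j$, since $\min_{1\le\tau\le t}$ ranges over a growing index set; hence $\bar\calX_S^t\subseteq\bar\calX_S^{t+1}$ for $t\ge1$, and $\bar\calX_S^0=\bar\calX_R\subseteq\bar\calX_S^1$ because the tightening of $\bar\calX_R$ equals $\norm{g_j}_1a_w$, which is at least the minimum defining $\bar\calX_S^1$. As the maximum invariant set attached to a larger constraint set is itself larger, this gives $\bar\calX_{f,S}^t\subseteq\bar\calX_{f,S}^{t+1}$ as well. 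Recursive feasibility is then the standard tube shift: for an optimal solution of \eqref{ASMPC} at $t$, applying \eqref{tube_controller} gives $x_{t+1}-\bar x_{t+1|t}^*=A_{cl}(x_t-\bar x_{t|t}^*)+w_t\in A_{cl}\calE\oplus\calW\subseteq\calE$ by the minimal-RPI property of $\calE$, so $x_{t+1}\in\bar x_{t+1|t}^*\oplus\calE$; the usual shifted candidate (keep $\bar x_{t+1+k|t}^*,\bar u_{t+1+k|t}^*$ for the first $N-1$ stages, append $(\bar A_{cl}\bar x_{t+N|t}^*,\bar K\bar x_{t+N|t}^*)$) then satisfies every constraint of \eqref{ASMPC} at $t+1$: the state constraints using $\bar\calX_S^t\subseteq\bar\calX_S^{t+1}$, the input constraints using $\bar K\bar\calX_{f,S}^t\subseteq\calU\ominus K\calE$, and the terminal constraint using invariance of $\bar\calX_{f,S}^t$ under $\bar A_{cl}$ together with $\bar\calX_{f,S}^t\subseteq\bar\calX_{f,S}^{t+1}$. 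This holds for every noise realization.

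\emph{Chance constraints.} Next I would apply Lemma~2 to obtain a \emph{single} event $\calG$ with $\Prob(\calG)\ge1-\delta$ on which $g_j^T\Sigma_wg_j\le g_j^T\hat\Sigma_w^\tau g_j+r_{\tau j}(\delta)$ for all $\tau\ge1$ and all $j$. On $\calG$ the tightening in \eqref{ASMPC_nominal_state_constraints} is, at every $t$, at least $\min\{\norm{g_j}_1a_w,f(\alpha_j)\sqrt{g_j^T\Sigma_wg_j}\}$, i.e.\ $\bar\calX_S^t\subseteq\bar\calX_S$ for all $t\ge0$ (with $\bar\calX_S^0=\bar\calX_R\subseteq\bar\calX_S$). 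Hence on $\calG$ the optimal predictions of \eqref{ASMPC} obey the benchmark constraints \eqref{BSMPC_nominal_state_constraints}, and since $x_t\in\bar x_{t|t}^*\oplus\calE$, the argument from the proof of Theorem~1 — Lemma~1 on the stochastic branch of the minimum and the robust-tube inclusion on the $\norm{g_j}_1a_w$ branch — gives $\Prob(g_j^Tx_t\le h_j)\ge1-\alpha_j$ at every $t$. Because $\calG$ does not depend on $t$, this is precisely time-uniform satisfaction of \eqref{chance_state_constraints} with probability at least $1-\delta$.

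\emph{Convergence and the main obstacle.} Finally I would take the optimal value $V_t$ of \eqref{ASMPC} as a Lyapunov function for the nominal state: feasibility of the shifted candidate bounds $V_{t+1}$ by the cost of that candidate, and a short computation using \eqref{lyapunov_equation} (telescoping the stage costs) shows that cost equals $V_t-\ell(\bar x_{t|t}^*,\bar u_{t|t}^*)$, so $V_{t+1}-V_t\le-\ell(\bar x_{t|t}^*,\bar u_{t|t}^*)\le-\lambda_{\min}(Q)\norm{\bar x_{t|t}^*}^2$; summing yields $\bar x_{t|t}^*\to0$, and since $x_t-\bar x_{t|t}^*\in\calE$ the closed-loop state converges to $\calE$ for every noise realization. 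I expect the main obstacle to be purely in the bookkeeping around the time-varying sets: one must check that enlarging $\bar\calX_S^t,\bar\calX_{f,S}^t$ between consecutive steps never breaks the shifted candidate or the cost comparison — which is exactly where monotonicity is used — and that the confidence level does not decay with $t$, which is where the uniform-in-$t$ (rather than per-step) nature of Lemma~2 is essential.
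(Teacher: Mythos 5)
Your proposal is correct and follows essentially the same route as the paper's proof: monotonicity of the sets $\bar\calX_S^t,\bar\calX_{f,S}^t$ plus the standard shifted-candidate argument for recursive feasibility, a single time-uniform event from Lemma~2 on which $\bar\calX_S^t\subseteq\bar\calX_S$ so that Lemma~1 and the tube inclusion yield the chance constraints, and the optimal value function as a Lyapunov function via \eqref{lyapunov_equation} for convergence to $\calE$. You are, if anything, slightly more explicit than the paper about the set-monotonicity bookkeeping and about why the confidence level does not decay with $t$.
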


\begin{wrapfigure}{r}{5cm}
   \centering
   \vspace*{-0.4cm}
   \includegraphics[width=0.3\textwidth]{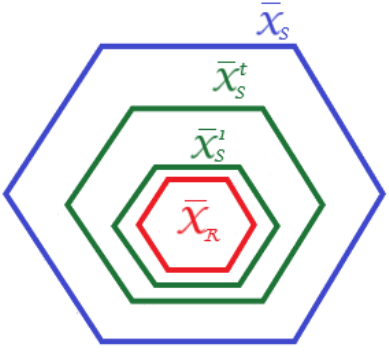}
   \caption{{\small High-confidence sketch illustration of the subset relationship between $\bar{\calX}_R$, $\bar{\calX}_S$, $\bar{\calX}^1_S$, and $\bar{\calX}^t_S$, for some $t\geq2$. The set $\bar{\calX}_S^t$ converges to $\bar{\calX}_S$ as $t\to\infty$.}}
\end{wrapfigure}
\begin{myremark} Notice that the difference between the robust MPC, the benchmark SMPC and the adaptive SMPC lies in the nominal state and terminal constraints. We know that $\hat{\Sigma}_w^t$ is an asymptotically consistent estimate of $\Sigma_w$, and we have $r_{tj}(\delta)=\calO\left(\sqrt{t^{-1}\log t}\right)\xrightarrow{t\to\infty}0$, for all $j=1,\ldots,n_x$. Hence, the nominal state constraints \eqref{ASMPC_nominal_state_constraints} of \eqref{ASMPC} converge to the nominal state constraints \eqref{BSMPC_nominal_state_constraints} of \eqref{BSMPC} as $t\rightarrow\infty$. Moreover, from our theoretical analysis we know that $\bar{\calX}_R=\bar{\calX}_S^0\subseteq\ldots\subseteq\bar{\calX}_S^t$, for all $t\geq1$, and $\bigcup_{t=0}^\infty\bar{\calX}_S^t\subseteq\bar{\calX}_S$ with high probability. The above observations are illustrated in Figure 1. We conclude that both the benchmark and the adaptive SMPC are less conservative than the robust MPC, and in addition, the adaptive SMPC ensures satisfaction of the benchmark SMPC's nominal state constraints at all time steps with high confidence, while learning them online. Note that, herein, conservatism is related to the size of the feasible set of each MPC problem in the sense that the larger the feasible set, the smaller the cost and thus the better the performance of a controller. For this reason, the performance of the adaptive SMPC is expected to be worse than the one of the benchmark SMPC and converge to it as $t\to\infty$. 
\end{myremark}
\begin{myremark}[Tracking of multiple successive targets]
Up to now, we only studied the problem of tracking the origin $x=0$. However, our adaptive SMPC scheme can easily be extended to track non-zero targets. This can be particularly useful in a tracking scenario with multiple successive targets. More specifically, notice that every time we rerun our adaptive SMPC to track a new target, the feasible set of problem \eqref{ASMPC} is larger and larger than the one of the robust MPC problem \eqref{RMPC}, which implies an online enlargement of the domain of attraction. In Section 6, we validate this observation through an illustrative example.  
\end{myremark}

\section{Case Study: DC-DC Converter}
In this section, we illustrate the efficacy of the proposed adaptive SMPC strategy using a simplified model of a DC-DC converter \citep{Lorenzen2015}. Particularly, we consider a linearized DC-DC converter model of the form \eqref{system} with:
\begin{equation*}
    A = \begin{bmatrix}
            1 & 0.0075 \\
            -0.143 & 0.996
        \end{bmatrix},\;
    B = \vectt{4.798}{0.115},
\end{equation*}
and zero-mean Gaussian noise with covariance $\Sigma=0.25\id$ truncated at $a_w=1$. Suppose the state is subject to:
\begin{align*}
    &\Prob\left(\vect{1}{0}x_t\leq9.6\right)\geq0.2,\;\Prob\left(\vect{-1}{0}x_t\leq9.6\right)\geq0.2, \\
    &\Prob\left(\vect{0}{1}x_t\leq9.6\right)\geq0.2,\;\Prob\left(\vect{0}{-1}x_t\leq9.6\right)\geq0.2,
\end{align*}
and the input  to hard constraints given by $\abs{u_t}\leq5$. We wish to minimize the cost for $Q=\diag2(1,10)$ and $R=1$ over a prediction horizon of length $N=9$. The gain matrices were selected as $K=\vect{-0.3429}{0.8629}$ and $\bar{K}=\vect{-0.2858}{0.4910}$, and the failure probability of time-uniform satisfaction of \eqref{BSMPC_nominal_state_constraints} for the adaptive SMPC as $\delta=0.1$.

Figure 2 shows the nominal state and terminal constraint sets of the robust MPC, the benchmark SMPC, and the adaptive SMPC for $10$, $100$ and $1000$ noise samples. Notice that $\bar{\calX}_R\subseteq\bar{\calX}_S^{10}\subseteq\bar{\calX}_S^{100}\subseteq\bar{\calX}_S^{1000}\subseteq\bar{\calX}_S$ and  $\bar{\calX}_{f,R}\subseteq\bar{\calX}_{f,S}^{10}\subseteq\bar{\calX}_{f,S}^{100}\subseteq\bar{\calX}_{f,S}^{1000}\subseteq\bar{\calX}_{f,S}$, which indicates that the robust MPC is more conservative than the adaptive SMPC, which is more conservative than the benchmark SMPC. Hence, the time-uniform high-confidence guarantees of our adaptive SMPC satisfying the unknown chance state constraints \eqref{chance_state_constraints} are validated. Moreover, we observe that the sets $\bar{\calX}_S^{1000}$ and $\bar{\calX}_{f,S}^{1000}$ of the adaptive SMPC are very close to the sets $\bar{\calX}_S$ and $\bar{\calX}_{f,S}$, respectively, of the benchmark SMPC, in accordance with our theoretical result that the nominal state and terminal constraint sets of the adaptive SMPC asymptotically converge to the corresponding sets of the benchmark SMPC.

\begin{figure}[tbh]
\centering
\begin{minipage}{.4\linewidth}
  \centering
  \includegraphics[width=\linewidth]{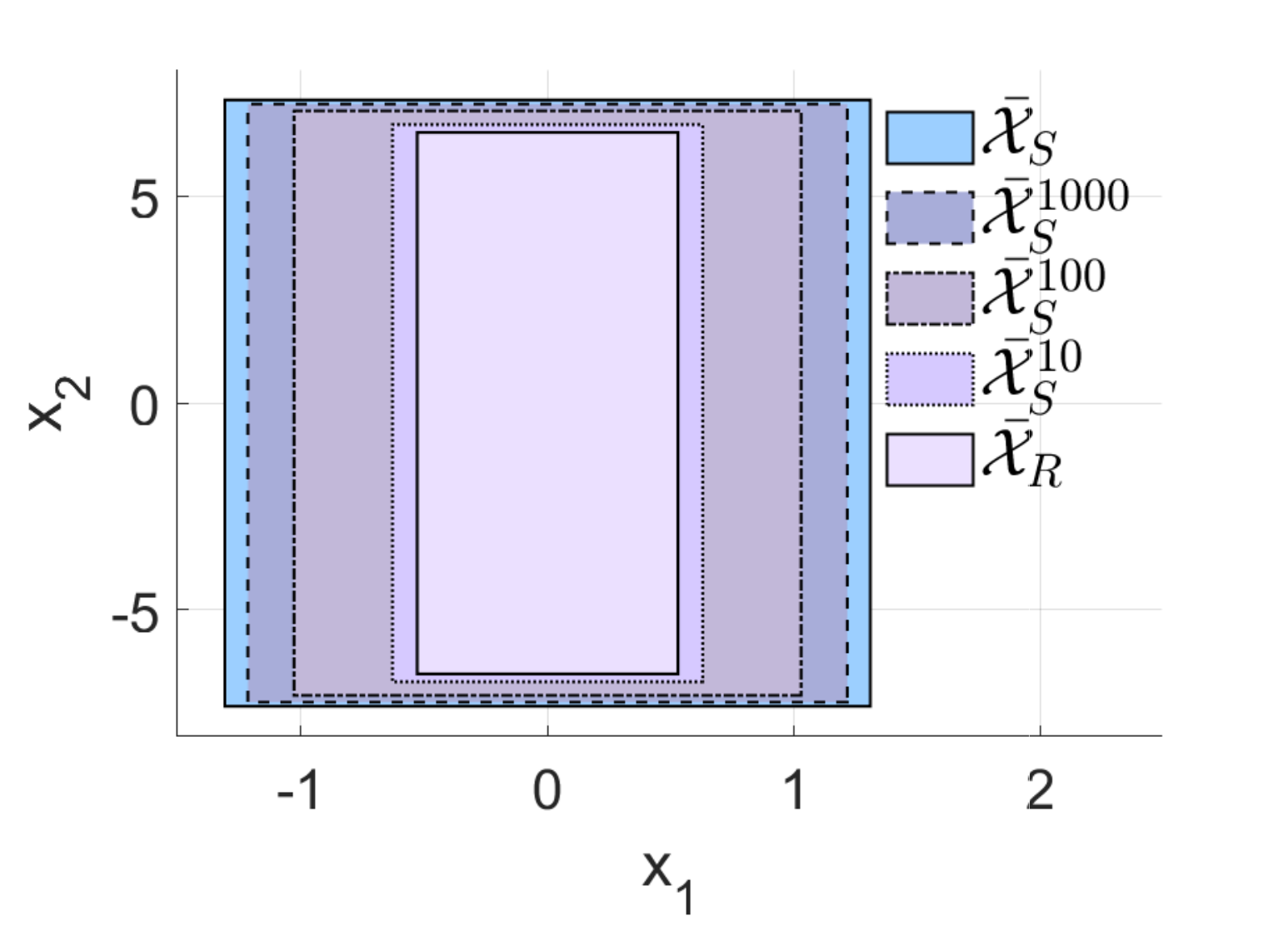}
  \\{\small (a)}
  \label{fig:test1}
\end{minipage}
\hspace{-0cm}
\begin{minipage}{.4\linewidth}
  \centering
  \includegraphics[width=\linewidth]{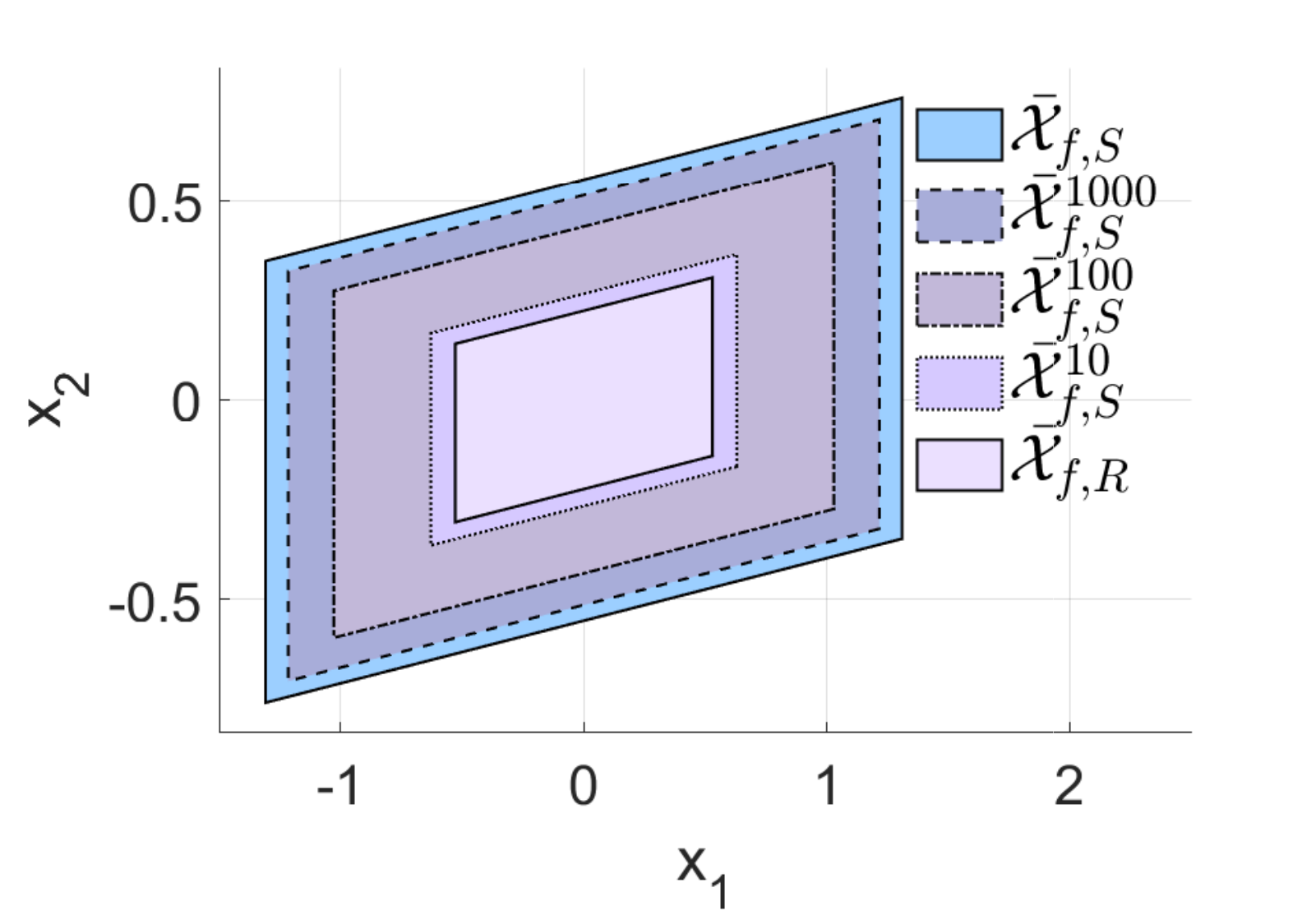}
  \\{\small (b)}
  \label{fig:test2}
\end{minipage}
\caption{{\small (a) Comparison of the nominal state constraint sets of the robust MPC ($\bar{\calX}_R$), the benchmark SMPC ($\bar{\calX}_S$), and the adaptive SMPC for $10,100$ and 1000 noise samples ($\bar{\calX}_S^{10}$, $\bar{\calX}_S^{100}$, $\bar{\calX}_S^{1000}$). (b) Comparison of the terminal constraint sets of the robust MPC ($\bar{\calX}_{f,R}$), the benchmark SMPC ($\bar{\calX}_{f,S}$), and the adaptive SMPC for $10,100$ and 1000 noise samples ( $\bar{\calX}_{f,S}^{10}$, $\bar{\calX}_{f,S}^{100}$, $\bar{\calX}_{f,S}^{1000}$).}}
\end{figure}

To demonstrate the benefits of our adaptive SMPC scheme, we compare it with the robust MPC and the benchmark SMPC in a two-target tracking scenario for initial state $x_0=\vect{6}{4}^T$. 
More specifically, suppose we want to employ the DC-DC converter in a DC motor drive to track $x_{\textup{target}}^{(1)}=\vect{0}{0}^T$ until time step $t=100$, when the target changes to $x_{\textup{target}}^{(2)}=\vect{-0.1169}{3.9983}^T$. In order to track the two consecutive desired states, we sequentially apply each MPC algorithm twice: i) with target $x_{\textup{target}}^{(1)}$, for $t=0,\ldots,99$ (Phase 1), and ii) with target $x_{\textup{target}}^{(2)}$, for $t=100,101,\ldots$ (Phase 2). We will evaluate the performance of our adaptive SMPC scheme using the cumulative cost $C(t_0,t_0+T) = \sum_{t=t_0}^{t_0+T-1}\ell(x_t,u_t)$, where $t_0$ is the initial time, $T$ the running time, and $x_t$, $u_t$ the state and the input, respectively, of the system in closed-loop with a model predictive controller.

In simulations we observed that robust MPC is feasible in Phase 1 and infeasible in Phase 2, in contrast to our benchmark and adaptive SMPC schemes, which are feasible in both phases. This indicates the enlarged domain of attraction of problems \eqref{BSMPC} and \eqref{ASMPC} compared to the one of problem \eqref{RMPC} in Phase 2, by the beginning of which $100$ noise samples have been gathered online (see Remark 5). Figure 3 illustrates the state evolution of the system under the benchmark and the adaptive SMPC schemes for $30$ noise sequences $\{w_0,\cdots,w_{118}\}$, as well as part of the boundary of the hard state constraint set $\calX$. We observe that the closed-loop trajectories of the two schemes are more and more similar over time (in Phase 2, at the beginning of which $100$ noise samples are available, they are almost indistinguishable), and in each phase the system converges to a small neighborhood of the corresponding target, thus verifying our robust stability guarantees. Notice also that the hard constraints $x_{t,1}\leq9.6$ and $x_{t,1}\geq-9.6$ are violated in Phase 1 and 2, respectively.

\begin{figure}[tbh]
   \centering
   \includegraphics[width=0.5\linewidth]{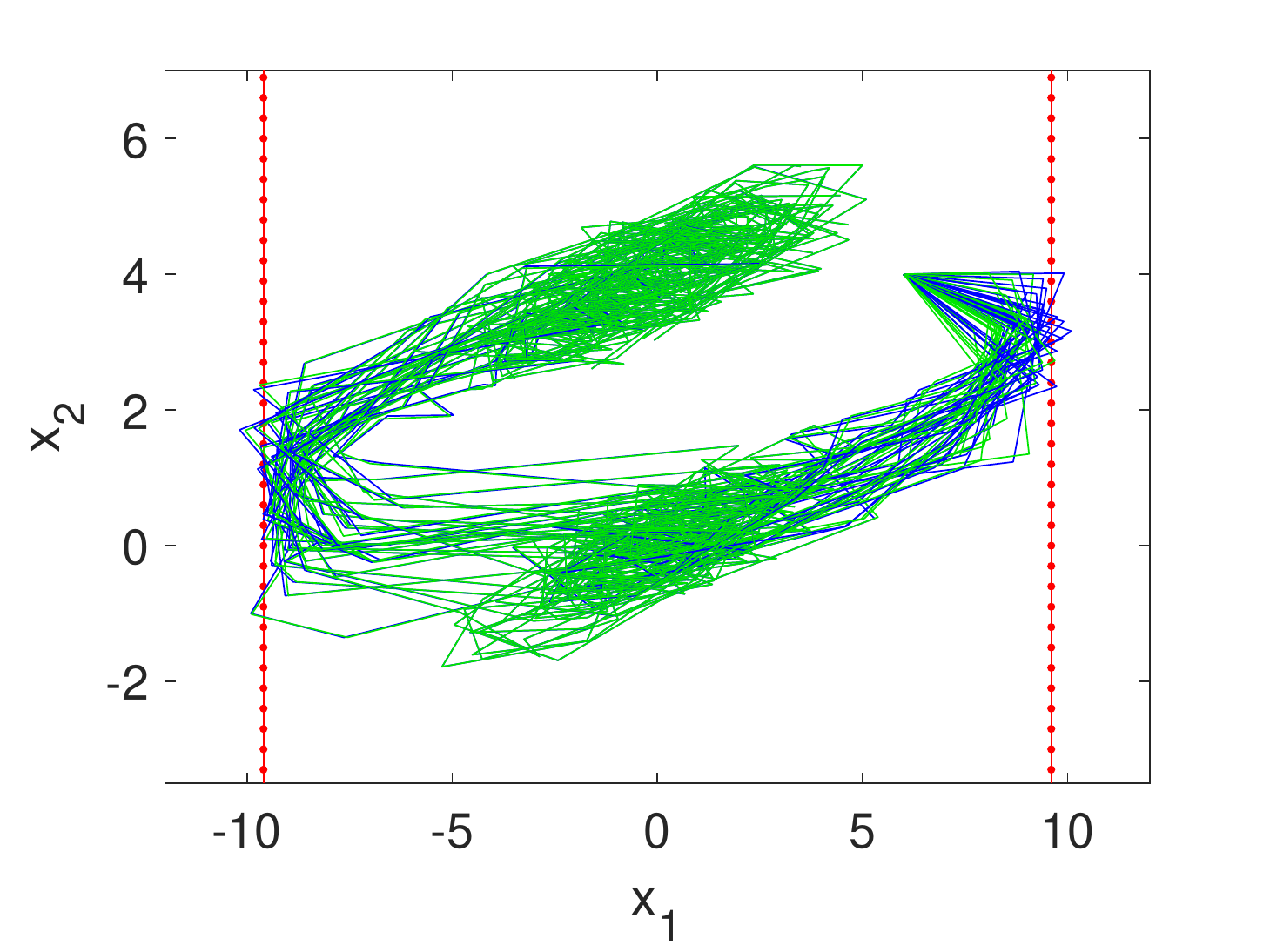}
   \caption{{\small State evolution of the DC-DC converter system under the benchmark SMPC (blue) and the adaptive SMPC (green) for initial state $x_0=\vect{6}{4}^T$ and $30$ noise sequences. The target state is $x_{\textup{target}}^{(1)}=\vect{0}{0}^T$ until $t=100$, when it is changed to $x_{\textup{target}}^{(2)}=\vect{-0.1169}{3.9983}^T$. The trajectories move clockwise from $(6,4)$ to $(0,0)$, and then to $(-0.1169,3.9983)$. The red lines belong to the boundary of the hard state constraint set $\calX$. Robust MPC is not depicted as it becomes infeasible when the target changes to $x_{\textup{target}}^{(2)}$.}}
\end{figure}

\begin{table}[tbh]
\centering
{\small
\begin{tabular}{|c|c|c|c|c|}
     \hline
     & \makecell{Violation of $x_{t,1}\leq9.6$ \\
    in Phase 1 ($\%$)} & \makecell{Violation of $x_{t,1}\geq-9.6$\\ in Phase 2 ($\%$)} &
     \makecell{$C(0,6)$} & \makecell{$C(100,106)$} \\
     \hline
     Benchmark SMPC & 6.24 & 5.60 & 515.73 & 788.19 \\
     \hline
     Adaptive SMPC & 0.32 & 3.92 & 518.62 & 788.50 \\
     \hline
\end{tabular}}
\caption{{\small Average constraint violation and cumulative cost over $10^3$ noise sequences.}}
\end{table}
Simulating the system with the benchmark and the adaptive SMPC for $10^3$ noise sequences, we computed the average constraint violations and cumulative costs shown in Table 1. Note that these values were calculated only for the first $7$ time steps of each phase, which was the time needed for the system to converge. Notice that in both phases the average constraint violation is larger and the average cumulative cost is smaller in the case of benchmark SMPC. This can be explained by the fact that in each phase the benchmark SMPC problem \eqref{BSMPC} has looser nominal state and terminal constraints than the adaptive SMPC problem \eqref{ASMPC}, and hence a larger feasible set, which allows the benchmark controller to violate the hard state constraints more often and achieve smaller cost than the adaptive one. However, in Phase 2, by the beginning of which $100$ noise samples have been collected online, the difference in the average constraint violation and cumulative cost of the benchmark and the adaptive SMPC is much smaller than in Phase 1. This was expected given our theoretical remark that the performance of the adaptive controller improves over time, as more noise samples are gathered and better estimates of the constraints \eqref{BSMPC_nominal_state_constraints} are obtained, gradually converging to the performance of the benchmark controller (see Remark 4).

\section{Conclusion and Future Work}
In this paper, we studied the problem of stochastic MPC (SMPC) for constrained linear systems under unknown noise distribution. We developed the first robustly stable adaptive SMPC scheme that guarantees time-uniform satisfaction of unknown chance state constraints with high probability, while learning their deterministic counterparts and thus improving control performance online.

Moving forward, our paper opens up several research directions. First, in order to reduce the conservatism introduced by the use of a tube-based control law with a precomputed stabilizing gain, we could try different parametrizations of the feedback control policy (e.g., a disturbance feedback parametrization with the gain as an optimization variable). Moreover, to mitigate possible conservatism resulting from the use of distributionally robust reformulated state constraints, we could try a new reformulation based on the quantile instead of the noise statistics, and provide time-uniform confidence intervals for the empirical quantile to design the new adaptive model predictive controller. Finally, our adaptive SMPC scheme could be used to analyze the regret of stochastic MPC under unknown noise distribution, which is an open problem.

\section*{Acknowledgement}
This work was supported by the AFOSR under grant FA9550-19-1-0265 (Assured Autonomy in Contested Environments).

\bibliography{arxiv_version}

\appendix
\section{Proofs}
\subsection{Proof of Lemma 1}
Using \eqref{error_system}, we can write:
\begin{align*}
    x_{t+k|t} &= \bar{x}_{t+k|t}+e_{t+k|t} \nonumber\\
              &= \bar{x}_{t+k|t}+A_{cl}e_{t+k-1|t}+w_{t+k-1|t},
\end{align*}
and thus replacing $x_t$ with $x_{t+k|t}$ in \eqref{chance_state_constraints}, we get:
\begin{align}\label{eq1}
    \Prob\Big(g_j^T\left(\bar{x}_{t+k|t}+A_{cl}e_{t+k-1|t}+w_{t+k-1|t}\right)&\leq h_j\Big)\geq1-\alpha_j.
\end{align}
Assuming that $x_t\in\bar{x}_{t|t}\oplus\calE$, we have $e_{t|t}\in\calE$ and by invariance of $\calE$ for system \eqref{error_system}, we deduce that $e_{t+k-1|t}\in\calE$, for all $k=1,\ldots,N$. Therefore, in order for \eqref{eq1} to hold, it suffices to have:
\begin{equation}\label{eq2}
    \Prob\left(g_j^Tw_{t+k-1|t}\leq h_j-g_j^T\bar{x}_{t+k|t}-\gamma_j\right)\geq1-\alpha_j.
\end{equation}
Employing Cantelli's inequality, we obtain:
\begin{align*}
    \Prob\left(g_j^Tw_{t+k-1|t}\geq h_j-g_j^T\bar{x}_{t+k|t}-\gamma_j\right)\leq \frac{g_j^T\Sigma_wg_j}{g_j^T\Sigma_wg_j+(h_j-g_j^T\bar{x}_{t+k|t}-\gamma_j)^2}.
\end{align*}
Consequently, a sufficient condition for \eqref{eq2} is:
\begin{align*}
    \alpha_j\geq \frac{g_j^T\Sigma_wg_j}{g_j^T\Sigma_wg_j+(h_j-g_j^T\bar{x}_{t+k|t}-\gamma_j)^2},
\end{align*}
which is equivalent to \eqref{pre_BSMPC_nominal_state_constraints}.$\qed$

\subsection{Proof of Theorem 1}
The proof of recursive feasibility and robust stability is omitted as it is similar to the one of the robust tube-based MPC \citep[Theorem 1]{Mayne2005}. 

From the analysis leading to \eqref{BSMPC_nominal_state_constraints} we know that the nominal state constraints \eqref{BSMPC_nominal_state_constraints} along with the initial constraint $x_t\in\bar{x}_{t|t}\oplus\calE$ of \eqref{BSMPC} are sufficient for \eqref{chance_state_constraints}. Combining this fact with recursive feasibility of \eqref{BSMPC}, we conclude that satisfaction of \eqref{chance_state_constraints} is guaranteed for all $t\geq0$, given the assumption that \eqref{BSMPC} is feasible at $t=0$.$\qed$

\subsection{Proof of Lemma 2}
Since the random vectors $w_i$ are zero-mean i.i.d. with covariance $\Sigma_w$, the variables $g_j^Tw_iw_i^Tg_j$ are i.i.d. with mean $g_j^T\Sigma_wg_j$. Moreover, given that $w_i$ satisfy $\norm{w_i}_\infty\leq a_w$, we have:
\begin{equation*}
    0\leq g_j^Tw_iw_i^Tg_j = (g_j^Tw_i)^2\leq\norm{g_j}_1^2a_w^2.
\end{equation*}
Hence, by definition of $\hat{\Sigma}_w^t$ and Hoeffding's inequality we obtain:
\begin{equation*}
    \Prob\left[-g_j^T\hat{\Sigma}_w^tg_j+g_j^T\Sigma_wg_j\geq s\right]\leq\exp\left(-\frac{2ts^2}{\norm{g_j}_1^4a_w^4}\right),
\end{equation*}
for all $s>0$. Therefore, setting $\delta'=t^2j^2\exp\left(-\frac{2ts^2}{\norm{g_j}_1^4a_w^4}\right)$, we conclude that with probability at least $1-\frac{\delta'}{t^2j^2}$:
\begin{equation}\label{eq3}
    g_j^T\Sigma_wg_j\leq g_j^T\hat{\Sigma}_w^tg_j+\norm{g_j}_1^2a_w^2\sqrt{\frac{1}{2t}\log\left(\frac{t^2j^2}{\delta'}\right)}.
\end{equation}
If $\calC_{tj}(\delta')$ denotes the set of matrices $\hat{\Sigma}_w^t$ that satisfy \eqref{eq3}, by De Morgan's law and Boole's inequality we can write:
\begin{align*}
   \Prob\left(\hat{\Sigma}_w^t\in\bigcap_{t,j} \calC_{tj}(\delta')\right)&= 1-\Prob\left(\hat{\Sigma}_w^t\in\bigcup_{t,j} \calC_{tj}^c(\delta')\right) \geq1-\sum_{j=1}^{n_x}\sum_{t=1}^\infty \Prob\left(\hat{\Sigma}_w^t\in \calC_{tj}^c(\delta')\right) \\
    &= 1-\sum_{j=1}^{n_x}\frac{\delta'}{j^2}\sum_{t=1}^\infty\frac{1}{t^2}\geq1-\frac{\pi^2\delta'}{6}\sum_{j=1}^\infty\frac{1}{j^2} = 1-\frac{\pi^4\delta'}{36}.
\end{align*}
Consequently, setting $\delta = \frac{\pi^4\delta'}{36}$, we deduce that with probability at least $1-\delta$:
\begin{equation*}
    g_j^T\Sigma_wg_j\leq g_j^T\hat{\Sigma}_w^tg_j+\norm{g_j}_1^2a_w^2\sqrt{\frac{1}{2t}\log\left(\frac{\pi^4t^2n_x^2}{36\delta}\right)},
\end{equation*}
uniformly over all $t\geq1$ and $j=1,\ldots,n_x$, which completes the proof.$\qed$

\subsection{Proof of Theorem 2}
Assume that \eqref{ASMPC} is feasible at time step $t=0$. We will prove that the solution of \eqref{ASMPC} ensures recursive feasibility, that is, we will show that under the above assumption, \eqref{ASMPC} is feasible at all time steps $t\geq0$. Let $\{\bar{u}_{0|0}^*,\ldots,\bar{u}_{N-1|0}^*\}$, $\{\bar{x}_{0|0},\ldots,\bar{x}_{N|0}\}$ denote the optimal nominal control and nominal state sequence, respectively. We apply $u_0$ given by \eqref{tube_controller} to \eqref{system} and the state evolves to:
\begin{align}\label{eq4}
    x_{1} &= Ax_0+BK(x_0-\bar{x}_{0|0})+B\bar{u}_{0|0}^*+w_0 \nonumber \\
          &= \bar{x}_{1|0}+A_{cl}e_{0|0}+w_0.
\end{align}
We will prove that at time step $t=1$ the sequences $\{\bar{u}_{1|0}^*,\ldots,\bar{u}_{N-1|0}^*,\bar{K}\bar{x}_{N|0}\}$ and $\{\bar{x}_{1|0},\ldots,\bar{x}_{N|0},$ $\bar{A}_{cl}\bar{x}_{N|0}\}$ compose a feasible solution of \eqref{ASMPC}.
From the initial condition of \eqref{ASMPC} we have $e_{0|0}\in\calE$. Hence, by invariance of $\calE$ for system \eqref{error_system}, we obtain $x_1\in\bar{x}_{1|0}\oplus\calE$, given \eqref{eq4}. It is also clear that $\bar{u}_{1|0}^*,\ldots,\bar{u}_{N-1|0}^*\in\calU\ominus K\calE$, $\bar{x}_{1|0},\ldots,\bar{x}_{N-1|0}\in\bar{\calX}_S^0\subseteq\bar{\calX}_S^1$, and $\bar{x}_{N|0}\in\bar{\calX}_{f,S}^0\subseteq\bar{\calX}_S^0\subseteq\bar{\calX}_S^1$. Moreover, since $\bar{x}_{N|0}\in\bar{\calX}_{f,S}^0$, by the definition of $\bar{\calX}_{f,S}^0$ we deduce that  $\bar{K}\bar{x}_{N|0}\in\calU\ominus K\calE$ and $\bar{A}_{cl}\bar{x}_{N|0}\in\bar{\calX}_{f,S}^0\subseteq\bar{\calX}_{f,S}^1$, which completes the proof of recursive feasibility (the proof is similar for the next time steps). 

By construction, the nominal state constraints \eqref{ASMPC_nominal_state_constraints} of problem \eqref{ASMPC} are sufficient for \eqref{BSMPC_nominal_state_constraints} uniformly over all $t\geq0$, with probability at least $1-\delta$. Moreover, from the analysis in Section 4 we know that the nominal state constraints \eqref{BSMPC_nominal_state_constraints} of \eqref{BSMPC} along with the initial constraint $x_t\in\bar{x}_{t|t}\oplus\calE$ are sufficient for \eqref{chance_state_constraints}. Combining these facts with recursive feasibility of \eqref{ASMPC}, we conclude that satisfaction of \eqref{chance_state_constraints} is guaranteed uniformly over all $t\geq0$, with probability at least $1-\delta$.

Following similar arguments to \citep[Theorem 1]{Mayne2005}, we will now show that applying the model predictive controller defined by \eqref{ASMPC} and \eqref{tube_controller} to system \eqref{system} in closed-loop implies asymptotic stability of the nominal system. Let $\calX_0$ denote the set of feasible initial states and $J_t^*(x_t)$ the optimal value function of \eqref{ASMPC}. Then we have:
\begin{equation*}
    J_0^*(x_0) = \sum_{k=0}^{N-1} \ell(\bar{x}_{k|0},\bar{u}_{k|0}^*)+\bar{x}_{N|0}^TP\bar{x}_{N|0},
\end{equation*}
which clearly satisfies $J_0^*(0)=0$ and $J_0^*(x_0)>0,\;\forall x_0\in\calX_0\setminus\{0\}$. Let $\tilde{J}_1(x_1)$ denote the value function obtained by applying the suboptimal nominal control sequence $\{\bar{u}_{1|0}^*,\ldots,\bar{u}_{N-1|0}^*,\bar{K}\bar{x}_{N|0}\}$ at time step $t=1$. By performing straightforward algebraic manipulations, we get:
\begin{align*}
    J_1^*(x_1) \leq J_0^*(x_0)-\ell(\bar{x}_{0|0},\bar{u}_{0|0}^*)+\bar{x}_{N|0}^T\Big(\bar{A}_{cl}^TP\bar{A}_{cl}-P+Q+\bar{K}^TR\bar{K}\Big)\bar{x}_{N|0}<J_0^*(x_0),
\end{align*}
for all $x_0\in\calX_0\setminus\{0\}$, where the final step follows from positive definiteness of $\ell(\cdot,\cdot)$ and \eqref{lyapunov_equation}. Consequently, $J_0^*(x_0)$ is a Lyapunov function, which implies that the nominal system is asymptotically stable. Therefore, system \eqref{system} in closed-loop with the model predictive controller determined by \eqref{ASMPC} and \eqref{tube_controller} is robustly stable in the sense that $\lim_{t\to\infty}\dist\left(x_t,\calE\right)=0$, where $\dist(\cdot,\cdot)$ is any fixed distance function.$\qed$

\section{Extension for Unknown Noise Mean}
For clarity of presentation, we assumed that the noise mean is known and equal to zero. Now we explain how this simplifying assumption can be dropped, thus extending our results to the case of unknown noise mean $\mu_w$. 

In the case of benchmark SMPC, where $\mu_w$ is assumed to be known, the nominal state constraints \eqref{BSMPC_nominal_state_constraints} are modified as follows:
\begin{align}\label{eq5}
    g_j^T\bar{x}_{t+k|t}\leq h_j-\gamma_j-\min\left\{\norm{g_j}_1a_w,g_j^T\mu_w+f(\alpha_j)\sqrt{g_j^T\Sigma_wg_j}\right\},
\end{align}
with the extra term $g_j^T\mu_w$ now introduced when applying Cantelli's inequality as in the proof of Lemma 1. 

Let $\hat{\mu}_w^t=\frac{1}{t}\sum_{i=0}^{t-1}w_i$, $\hat{\Sigma}_w^t=\frac{1}{t}\sum_{i=0}^{t-1}(w_i-\mu_w)(w_i-\mu_w)^T$ and $\tilde{\Sigma}_w^t=\frac{1}{t}\sum_{i=0}^{t-1}(w_i-\hat{\mu}_w^t)(w_i-\hat{\mu}_w^t)^T$ denote the noise sample mean, the noise sample covariance under known noise mean, and the noise sample covariance under unknown noise mean, respectively, at time step $t$. 
\begin{mylemma}
Fix a failure probability $\delta\in(0,1)$. With probability at least $1-\delta$, the sample mean $\hat{\mu}_w^t$ and the sample covariance $\tilde{\Sigma}_w^t$ satisfy:
\begin{equation*}
   g_j^T\mu_w\leq g_j^T\hat{\mu}_w^t+\tilde{r}_{tj}(\delta/2) 
\end{equation*}
and:
\begin{equation*}
    g_j^T\Sigma_wg_j\leq g_j^T\tilde{\Sigma}_w^tg_j+4r_{tj}(\delta/2)+\tilde{r}_{tj}^2(\delta/2),
\end{equation*}
respectively, where $r_{tj}(\cdot)$ is given by \eqref{ASMPC} and:
\begin{align*}
   \tilde{r}_{tj}(\delta) = \norm{g_j}_1a_w\sqrt{\frac{2}{t}\log\left(\frac{\pi^4t^2n_x^2}{18\delta}\right)},
\end{align*}
uniformly over all $t\geq1$ and $j=1,\ldots,n_x$.
\end{mylemma}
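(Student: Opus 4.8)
The plan is to establish the two inequalities as two separate high-probability events, each with failure probability at most $\delta/2$, so that a final union bound yields the joint statement with probability at least $1-\delta$. I would first record the facts used throughout: since $w_i\in\calW$ almost surely and $\calW$ is convex, $\mu_w\in\calW$, so $\norm{\mu_w}_\infty\le a_w$ and $\abs{g_j^T\mu_w}\le\norm{g_j}_1\norm{\mu_w}_\infty\le\norm{g_j}_1 a_w$; hence $\abs{g_j^Tw_i}\le\norm{g_j}_1 a_w$ and $\abs{g_j^T(w_i-\mu_w)}\le 2\norm{g_j}_1 a_w$.

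For the mean bound I would apply the two-sided Hoeffding inequality to $g_j^T\hat{\mu}_w^t=\tfrac1t\sum_{i=0}^{t-1}g_j^Tw_i$, whose summands are i.i.d., have mean $g_j^T\mu_w$, and lie in an interval of length $2\norm{g_j}_1 a_w$, obtaining $\Prob[\abs{g_j^T(\hat{\mu}_w^t-\mu_w)}\ge s]\le 2\exp(-ts^2/(2\norm{g_j}_1^2 a_w^2))$ for every $s>0$. Setting the right-hand side equal to $\tfrac{6}{\pi^2 t^2}\cdot\tfrac{6}{\pi^2 j^2}\cdot\tfrac{\delta}{2}$, solving for $s$, and then replacing $j$ by $n_x$ gives a radius bounded by $\tilde{r}_{tj}(\delta/2)$ — this is exactly why the constant $18$ appears in the definition of $\tilde{r}_{tj}$. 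A union bound over $t\ge1$ and $j=1,\ldots,n_x$, using $\sum_{t\ge1}t^{-2}=\sum_{j\ge1}j^{-2}=\pi^2/6$ as in the proof of Lemma~2, then shows that with probability at least $1-\delta/2$ the two-sided bound $\abs{g_j^T(\hat{\mu}_w^t-\mu_w)}\le\tilde{r}_{tj}(\delta/2)$ holds uniformly over all $t\ge1$ and $j$; the one-sided inequality in the statement is an immediate consequence.

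For the covariance bound I would repeat the argument of Lemma~2, now applied to the centered variables $(g_j^T(w_i-\mu_w))^2$, which are i.i.d.\ with mean $g_j^T\Sigma_w g_j$ and, by the preliminary bounds, lie in $[0,4\norm{g_j}_1^2 a_w^2]$. Because this bounded range is four times the one in Lemma~2, and using budget $\delta/2$ in place of $\delta$, I obtain $g_j^T\Sigma_w g_j\le g_j^T\hat{\Sigma}_w^t g_j+4 r_{tj}(\delta/2)$ uniformly in $t,j$ with probability at least $1-\delta/2$, where $\hat{\Sigma}_w^t=\tfrac1t\sum_{i=0}^{t-1}(w_i-\mu_w)(w_i-\mu_w)^T$. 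Next I would use the elementary decomposition $\hat{\Sigma}_w^t=\tilde{\Sigma}_w^t+(\hat{\mu}_w^t-\mu_w)(\hat{\mu}_w^t-\mu_w)^T$, obtained by expanding both sample covariances around $\mu_w$, which gives $g_j^T\hat{\Sigma}_w^t g_j=g_j^T\tilde{\Sigma}_w^t g_j+(g_j^T(\hat{\mu}_w^t-\mu_w))^2$. Intersecting the two events above (probability at least $1-\delta$ by a union bound) and inserting the two-sided mean bound into this identity then yields $g_j^T\Sigma_w g_j\le g_j^T\tilde{\Sigma}_w^t g_j+(g_j^T(\hat{\mu}_w^t-\mu_w))^2+4 r_{tj}(\delta/2)\le g_j^T\tilde{\Sigma}_w^t g_j+\tilde{r}_{tj}^2(\delta/2)+4 r_{tj}(\delta/2)$, which is the second claimed inequality.

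The analytical ingredients — Hoeffding, Boole's inequality, and the variance-decomposition identity — are all standard, so the main obstacle is the bookkeeping. One must split $\delta$ as $\delta/2+\delta/2$; one must prove the \emph{two-sided} mean deviation bound, not merely the one-sided statement appearing in the lemma, since the sample-mean correction enters the covariance as a square; and one must carefully carry the factor $4$ from the doubled range of $g_j^T(w_i-\mu_w)$ and verify that it multiplies $r_{tj}(\delta/2)$ rather than changing its argument. None of this is deep, but matching the stated constants in $\tilde{r}_{tj}$ and $4 r_{tj}(\delta/2)$ requires care.
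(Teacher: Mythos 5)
Your proposal is correct and follows essentially the same route as the paper's proof: a two-sided Hoeffding bound for $g_j^T\hat{\mu}_w^t$ and a one-sided Hoeffding bound for the centered squared variables (with the factor $4$ from the doubled range), each made time-uniform via a $\sum_t t^{-2}\sum_j j^{-2}$ union bound, combined through the identity $\hat{\Sigma}_w^t=\tilde{\Sigma}_w^t+(\hat{\mu}_w^t-\mu_w)(\hat{\mu}_w^t-\mu_w)^T$ and a final $\delta/2+\delta/2$ split. Your constant bookkeeping (the $18$ in $\tilde{r}_{tj}$ and the placement of the factor $4$ outside $r_{tj}(\delta/2)$) matches the paper's.
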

\begin{proof}
By performing straightforward algebraic manipulations, we can easily prove that:
\begin{equation}\label{eq6}
    \hat{\Sigma}_w^t=\tilde{\Sigma}_w^t+(\mu_w-\hat{\mu}_w^t)(\mu_w-\hat{\mu}_w^t)^T.
\end{equation} 
Since the random variables $g_j^Tw_i$ are independent and satisfy $\abs{g_j^Tw_i}\leq\norm{g_j}_1a_w$, by definition of $\hat{\mu}_w^t$ and Hoeffding's inequality we obtain:
\begin{equation*}
    \Prob\left[\abs{g_j^T\hat{\mu}_w^t-g_j^T\mu_w}\geq s\right]\leq 2\exp\left(-\frac{s^2t}{2\norm{g_j}_1^2a_w^2}\right),
\end{equation*}
for all $s>0$. Hence, setting $\delta'=2t^2j^2\exp\left(-\frac{s^2t}{2\norm{g_j}_1^2a_w^2}\right)$, we deduce that with probability at least $1-\frac{\delta'}{t^2j^2}$:
\begin{equation}\label{eq7}
    \abs{g_j^T\hat{\mu}_w^t-g_j^T\mu_w}\leq\norm{g_j}_1a_w\sqrt{\frac{2}{t}\log\left(\frac{2t^2j^2}{\delta'}\right)}.
\end{equation}
If $\calM_{tj}(\delta')$ denotes the set of vectors $\hat{\mu}_w^t$ that satisfy \eqref{eq7}, by De Morgan's law and Boole's inequality we can write:
\begin{align*}
    \Prob\left(\hat{\mu}_w^t\in\bigcap_{t,j} \calM_{tj}(\delta')\right)&= 1-\Prob\left(\hat{\mu}_w^t\in\bigcup_{t,j} \calM_{tj}^c(\delta')\right) \geq1-\sum_{j=1}^{n_x}\sum_{t=1}^\infty \Prob\left(\hat{\mu}_w^t\in \calM_{tj}^c(\delta')\right) \\
    &= 1-\sum_{j=1}^{n_x}\frac{\delta'}{j^2}\sum_{t=1}^\infty\frac{1}{t^2}\geq1-\frac{\pi^2\delta'}{6}\sum_{j=1}^\infty\frac{1}{j^2} = 1-\frac{\pi^4\delta'}{36}.
\end{align*}
Therefore, setting $\delta = \frac{\pi^4\delta'}{36}$, we conclude that with probability at least $1-\delta$:
\begin{equation}\label{eq8}
    \abs{g_j^T\hat{\mu}_w^t-g_j^T\mu_w}\leq\tilde{r}_{tj}(\delta),
\end{equation}
uniformly over all $t\geq1$ and $j=1,\ldots,n_x$. Similarly to the proof of Lemma 2, we can show that with probability at least $1-\delta$:
\begin{equation}\label{eq9}
    g_j^T\Sigma_wg_j\leq g_j^T\hat{\Sigma}_w^tg_j+4r_{tj}(\delta),
\end{equation}
uniformly over all $t\geq1$ and $j=1,\ldots,n_x$. Note that the extra coefficient $4$ in the last term of the right-hand side of (25) results from the fact that Hoeffding's inequality is now applied for the random variables $g_j^T(w_i-\mu_w)(w_i-\mu_w)^Tg_j$, which are upper-bounded by $4\norm{g_j}_1^2a_w^2$, instead of the random variables $g_j^Tw_iw_i^Tg_j$, which are upper-bounded by $\norm{g_j}_1^2a_w^2$. Consequently, by a union bound we can show that both \eqref{eq8} and \eqref{eq9} hold uniformly over all $t\geq1$ and $j=1,\ldots,n_x$, with probability at least $1-2\delta$. Employing this fact as well as \eqref{eq6}, the proof is easily completed.
\end{proof}
From Lemma 3 and \eqref{eq5} the proposed adaptive SMPC scheme can be extended to the case of unknown noise mean by simply replacing the nominal state constraints \eqref{ASMPC_nominal_state_constraints} with:
\begin{align*}
 &g_j^T\bar{x}_{t+k|t}\leq h_j-\gamma_j- \min\Bigg\{\norm{g_j}_1a_w,\min_{1\leq\tau\leq t}\Bigg[g_j^T\hat{\mu}_w^\tau+\\
 &\tilde{r}_{\tau j}(\delta/2)+f(\alpha_j)\sqrt{g_j^T\tilde{\Sigma}_w^\tau g_j+4r_{\tau j}(\delta/2)+\tilde{r}_{\tau j}^2(\delta/2)}\Bigg]\Bigg\},
\end{align*}
\hspace{-0.135cm}and all the observations made in Section 5 still apply.

\end{document}